\documentclass[11pt]{utarticle}

\usepackage{amsmath}
\usepackage{mathrsfs}
\usepackage{amsfonts}
\usepackage{amssymb}
\usepackage{amsthm}
\usepackage{mathtools}
\usepackage{graphicx}
\usepackage{color}
\usepackage{ucs}
\usepackage{xparse}
\usepackage{hyperref}
\usepackage{longtable}
\usepackage{xypic}

\long\def\omit#1{}
%
%
%

\definecolor{aqua}{rgb}{0, 1.0, 1.0}
\definecolor{fuschia}{rgb}{1.0, 0, 1.0}
\definecolor{gray}{rgb}{0.502, 0.502, 0.502}
\definecolor{lime}{rgb}{0, 1.0, 0}
\definecolor{maroon}{rgb}{0.502, 0, 0}
\definecolor{navy}{rgb}{0, 0, 0.502}
\definecolor{olive}{rgb}{0.502, 0.502, 0}
\definecolor{purple}{rgb}{0.502, 0, 0.502}
\definecolor{silver}{rgb}{0.753, 0.753, 0.753}
\definecolor{teal}{rgb}{0, 0.502, 0.502}


%
\makeatletter
\newdimen\itex@wd%
\newdimen\itex@dp%
\newdimen\itex@thd%
\def\itexspace#1#2#3{\itex@wd=#3em%
\itex@wd=0.1\itex@wd%
\itex@dp=#2ex%
\itex@dp=0.1\itex@dp%
\itex@thd=#1ex%
\itex@thd=0.1\itex@thd%
\advance\itex@thd\the\itex@dp%
\makebox[\the\itex@wd]{\rule[-\the\itex@dp]{0cm}{\the\itex@thd}}}
\makeatother

\makeatletter
\newif\if@sup
\newtoks\@sups
\def\append@sup#1{\edef\act{\noexpand\@sups={\the\@sups #1}}\act}%
\def\reset@sup{\@supfalse\@sups={}}%
\def\mk@scripts#1#2{\if #2/ \if@sup ^{\the\@sups}\fi \else%
  \ifx #1_ \if@sup ^{\the\@sups}\reset@sup \fi {}_{#2}%
  \else \append@sup#2 \@suptrue \fi%
  \expandafter\mk@scripts\fi}
\def\tensor#1#2{\reset@sup#1\mk@scripts#2_/}
\def\multiscripts#1#2#3{\reset@sup{}\mk@scripts#1_/#2%
  \reset@sup\mk@scripts#3_/}
\makeatother

\makeatletter
\newbox\slashbox \setbox\slashbox=\hbox{$/$}
\def\itex@pslash#1{\setbox\@tempboxa=\hbox{$#1$}
  \@tempdima=0.5\wd\slashbox \advance\@tempdima 0.5\wd\@tempboxa
  \copy\slashbox \kern-\@tempdima \box\@tempboxa}
\def\slash{\protect\itex@pslash}
\makeatother

\def\clap#1{\hbox to 0pt{\hss#1\hss}}

\def\mathrlap{\mathpalette\mathrlapinternal}

\def\mathrlapinternal#1#2{\rlap{$\mathsurround=0pt#1{#2}$}}

\let\oldroot\root
\def\root#1#2{\oldroot #1 \of{#2}}
\renewcommand{\sqrt}[2][]{\oldroot #1 \of{#2}}

\DeclareSymbolFont{symbolsC}{U}{txsyc}{m}{n}
\SetSymbolFont{symbolsC}{bold}{U}{txsyc}{bx}{n}
\DeclareFontSubstitution{U}{txsyc}{m}{n}

\DeclareSymbolFont{stmry}{U}{stmry}{m}{n}
\SetSymbolFont{stmry}{bold}{U}{stmry}{b}{n}

\DeclareFontFamily{OMX}{MnSymbolE}{}
\DeclareSymbolFont{mnomx}{OMX}{MnSymbolE}{m}{n}
\SetSymbolFont{mnomx}{bold}{OMX}{MnSymbolE}{b}{n}
\DeclareFontShape{OMX}{MnSymbolE}{m}{n}{
    <-6>  MnSymbolE5
   <6-7>  MnSymbolE6
   <7-8>  MnSymbolE7
   <8-9>  MnSymbolE8
   <9-10> MnSymbolE9
  <10-12> MnSymbolE10
  <12->   MnSymbolE12}{}

\makeatletter
\def\re@DeclareMathSymbol#1#2#3#4{%
    \let#1=\undefined
    \DeclareMathSymbol{#1}{#2}{#3}{#4}}
\re@DeclareMathSymbol{\neArrow}{\mathrel}{symbolsC}{116}
\re@DeclareMathSymbol{\neArr}{\mathrel}{symbolsC}{116}
\re@DeclareMathSymbol{\seArrow}{\mathrel}{symbolsC}{117}
\re@DeclareMathSymbol{\seArr}{\mathrel}{symbolsC}{117}
\re@DeclareMathSymbol{\nwArrow}{\mathrel}{symbolsC}{118}
\re@DeclareMathSymbol{\nwArr}{\mathrel}{symbolsC}{118}
\re@DeclareMathSymbol{\swArrow}{\mathrel}{symbolsC}{119}
\re@DeclareMathSymbol{\swArr}{\mathrel}{symbolsC}{119}
\re@DeclareMathSymbol{\nequiv}{\mathrel}{symbolsC}{46}
\re@DeclareMathSymbol{\Perp}{\mathrel}{symbolsC}{121}
\re@DeclareMathSymbol{\Vbar}{\mathrel}{symbolsC}{121}
\re@DeclareMathSymbol{\sslash}{\mathrel}{stmry}{12}
\re@DeclareMathSymbol{\bigsqcap}{\mathop}{stmry}{"64}
\re@DeclareMathSymbol{\biginterleave}{\mathop}{stmry}{"6}
\re@DeclareMathSymbol{\invamp}{\mathrel}{symbolsC}{77}
\re@DeclareMathSymbol{\parr}{\mathrel}{symbolsC}{77}
\makeatother

\makeatletter
\def\Decl@Mn@Delim#1#2#3#4{%
  \if\relax\noexpand#1%
    \let#1\undefined
  \fi
  \DeclareMathDelimiter{#1}{#2}{#3}{#4}{#3}{#4}}
\def\Decl@Mn@Open#1#2#3{\Decl@Mn@Delim{#1}{\mathopen}{#2}{#3}}
\def\Decl@Mn@Close#1#2#3{\Decl@Mn@Delim{#1}{\mathclose}{#2}{#3}}
\Decl@Mn@Open{\llangle}{mnomx}{'164}
\Decl@Mn@Close{\rrangle}{mnomx}{'171}
\Decl@Mn@Open{\lmoustache}{mnomx}{'245}
\Decl@Mn@Close{\rmoustache}{mnomx}{'244}
\makeatother

\makeatletter
\DeclareRobustCommand\widecheck[1]{{\mathpalette\@widecheck{#1}}}
\def\@widecheck#1#2{%
    \setbox\z@\hbox{\m@th$#1#2$}%
    \setbox\tw@\hbox{\m@th$#1%
       \widehat{%
          \vrule\@width\z@\@height\ht\z@
          \vrule\@height\z@\@width\wd\z@}$}%
    \dp\tw@-\ht\z@
    \@tempdima\ht\z@ \advance\@tempdima2\ht\tw@ \divide\@tempdima\thr@@
    \setbox\tw@\hbox{%
       \raise\@tempdima\hbox{\scalebox{1}[-1]{\lower\@tempdima\box
\tw@}}}%
    {\ooalign{\box\tw@ \cr \box\z@}}}
\makeatother

\makeatletter
\NewDocumentCommand\mathraisebox{moom}{%
\IfNoValueTF{#2}{\def\@temp##1##2{\raisebox{#1}{$\m@th##1##2$}}}{%
\IfNoValueTF{#3}{\def\@temp##1##2{\raisebox{#1}[#2]{$\m@th##1##2$}}%
}{\def\@temp##1##2{\raisebox{#1}[#2][#3]{$\m@th##1##2$}}}}%
\mathpalette\@temp{#4}}
\makeatletter

\makeatletter
\def\udots{\mathinner{\mkern2mu\raise\p@\hbox{.}
\mkern2mu\raise4\p@\hbox{.}\mkern1mu
\raise7\p@\vbox{\kern7\p@\hbox{.}}\mkern1mu}}
\makeatother




\newcommand{\lt}{<}

\newcommand{\integral}{\int}

\theoremstyle{plain}

\newtheorem*{utheorem}{Theorem}
\newtheorem*{ulemma}{Lemma}

\theoremstyle{definition}

\theoremstyle{remark}

\newcommand{\Tr}[2][]{\mathrm{Tr}_{#1}\left(#2\right)}
\def\Bid{{\mathchoice {\rm {1\mskip-4.5mu l}} {\rm
{1\mskip-4.5mu l}} {\rm {1\mskip-3.8mu l}} {\rm {1\mskip-4.3mu l}}}}


\begin{document}
\preprint{
UTTG-01-17\\}
\title{von Neumann's Formula, Measurements and the Lindblad Equation}

\author{Jacques Distler and Sonia Paban
     \oneaddress{
      Theory Group \\
      Department of Physics,\\
      University of Texas at Austin,\\
      Austin, TX 78712, USA \\
      {~}\\
      \email{distler@golem.ph.utexas.edu}\\
      \email{paban@physics.utexas.edu}\\
      }
}

\date{\today}

\Abstract{
We previously remarked that when an observable $A$ has a continuous spectrum, then von Neumann's formula  for the post-measurement state needs to be extended and the correct formula ineluctably involves the resolution of the detector used in the measurement. We generalize previous results to compute the uncertainties in successive measurements of more general pairs of observables. We also show that this extended von Neumann's formula for the  post-measurement state is a completely positive map and, moreover, that there is a completely-positive interpolation between the pre- and post-measurement states.  Weinberg has advocated that the time-evolution \emph{during} the measurement process should be modeled as an open quantum system and governed by a Lindblad equation. We verify that this is indeed the case for an arbitrary observable, $A$, and a fairly general class of interpolations.
}

\maketitle

\thispagestyle{empty}
\vfill
\newpage
\thispagestyle{empty}
\tableofcontents
\vfill
\newpage
\setcounter{page}{1}

\section{Introduction}

Our previous paper \cite{Distler:2012eh} was devoted to the topic of uncertainties in successive measurements: first measure an observable, $A$, and then (in the post-measurement state) measure a non-commuting observable, $B$. One can prove lower bounds for the product, $\Delta A\Delta B$, of the uncertainties. When $A$ has a continuous spectrum (say, $A=x$), then von Neumann's formula \eqref{vonN1} for the post-measurement state needs to be extended and the correct formula ineluctably involves the resolution of the detector used in the measurement. In \cite{Distler:2012eh}, we developed the formalism to describe this only for the special case of $A=x$ and $B=p$. Here, we present the general formula \eqref{vonN2} for an \emph{arbitrary} observable.

As an application this allows us, in \S\ref{genHeisenberg}, to extend our previous results to successive measurements of more general pairs of observables.  But, along the way, we make progress in some other directions.

In \S\ref{proveCompletePositivity} we show that   \eqref{vonN2} is a completely positive map and, moreover, that there is a completely-positive interpolation between the pre- and post-measurement states.

In \cite{Weinberg:2016axv}, Weinberg advocated that the time-evolution \emph{during} the measurement process should be modeled as an open quantum system and governed by a Lindblad equation \cite{Lindblad:1975ef, Gorini:1975nb} (see also \cite{Bassi:2003gd} and references therein for other approaches which lead to similar dynamics). In \S\ref{showLindblad} we verify that this is indeed the case for an arbitrary observable, $A$, and a fairly general class of interpolations.

\section{Measuring a General Observable}

Let $A$ be a self-adjoint operator with a pure point spectrum and let $P_\lambda$ be the projection onto the $\lambda$-eigenspace of $A$. The Born rule gives the probability that,  in state $\rho$, $A$ has value $\lambda$:
\begin{equation}\label{Born1}
\text{Prob}_\rho(\lambda) = \Tr{P_\lambda \rho}
\end{equation}
\emph{Measuring} $A$ changes the state, $\rho$, in a non-unitary way, given by von Neumann's formula \cite{vonNeumann},
\begin{equation}\label{vonN1}
\rho'  = \sum_\lambda P_\lambda \rho P_\lambda
\end{equation}
How do these formul\ae\ change when $A$ doesn't have a pure point spectrum? The modification to the Born rule is well-known, but the modification to von Neumann's formula is not. The latter ineluctably involves the additional data of a finite detector resolution and is most elegantly stated in the language (see, e.g., \cite{peres} for an introduction) of positive operator-valued measures (POVMs).

Before stating both of these, a brief digression on the Spectral Theorem (which will serve to fix our notation) is in order.

\section{The Spectral Theorem}

The Spectral Theorem states that there is a 1-1 correspondence between self-adjoint operators and projection-valued measures (PVMs),
\begin{equation}\label{spectral}
   A \leftrightarrow dP_\lambda
\end{equation}
A projection-valued measure (see, e.g., \cite{Mackey}), $dP_\lambda$, assigns to every Borel subset $S\subset \mathbb{R}$ a projection operator
\begin{equation}
\Pi(S) = \int_S dP_\lambda
\end{equation}
where the collection of projection operators, $\Pi(S)$, satisfy

\begin{itemize}
\item $\Pi(\emptyset) = 0$
\item $\Pi( \mathbb{R}) = 1$
\item $\Pi( S_1 \cap S_2) = \Pi( S_1)\Pi(S_2)$
\item If $S_1\cap S_2 =\emptyset$,  then $\Pi( S_1 \cup S_2) = \Pi( S_1)+\Pi(S_2)$\quad .
\end{itemize}

\noindent
The correspondence \eqref{spectral} is given by
$$
   A = \int \lambda dP_\lambda
$$

\section{Born Rule and von Neumann's Formula: the General Case}
The generalization of the Born rule \eqref{Born1} states that, in the state $\rho$, the probability that $A$ has value in $S$ is given by
\begin{equation}\label{Born2}
\text{Prob}_\rho(S) = \Tr{\Pi(S)\rho}
\end{equation}
When $A$ has a pure point spectrum, $dP_\lambda$ is sum of delta-functions and \eqref{Born2} reduces to \eqref{Born1}.

What about von Neumann's formula? In order to come up with a suitable generalization of \eqref{vonN1}, we \emph{need} to introduce a finite detector resolution and to draw a distinction between the intrinsic quantum-mechanical distribution \eqref{Born2}, for the values of $A$, and the probability distribution for the \emph{measured} values for $A$.

Let $|f(l)|^2 d l$ be a probability distribution on the real line with mean zero and standard deviation $\sigma_A$,
\begin{equation}\label{fdist}
\begin{split}
1 &= \int |f(l)|^2 d l\\
0 &= \int l |f(l)|^2 d l\\
\sigma_A^2 &= \int l^2 |f(l)|^2 d l
\end{split}
\end{equation}
We will interpret $|f(l)|^2 d l$ as an ``acceptance function" representing the finite resolution of our detector. You might keep in mind a Gaussian,
$$
f(l) = \frac{1}{(2\pi\sigma_A^2)^{\mathrlap{1/4}}}\quad e^{-l^2/4\sigma_A^2}
$$
but any smooth function satisfying \eqref{fdist} will do. Indeed, it is sometimes convenient to assume that $f(l)$ is also compactly-supported. An example to keep in mind is
\[
\begin{split}
f(l)&=
\begin{cases}
\frac{1}{\sqrt{c_1 \sigma_A}} \exp\left({-\frac{c_0 \sigma_A^2}{4 \sigma_A^2-l^2}}\right)
&|l|<2\sigma_A\\
 0& |l|\geq 2\sigma_A
\end{cases}\\
c_0 &\approx 0.389538720\\
c_1&\approx 2.711138019
\end{split}
\]

The probability that measuring $A$ yields a value in the Borel subset $S$ is
\begin{equation}\label{Born3}
\text{Prob}_{\text{meas}}(S) = \Tr{F(S)\rho} 
\end{equation}
where
\begin{equation}
  F(S) = \int_S F_l dl
\end{equation}
and the POVM $F_l dl$ is given by
\begin{equation}
   F_l = \int |f(l-\lambda)|^2 dP_\lambda = |f(l-A)|^2
\end{equation}
The positive (more precisely, positive \emph{semi}-definite) operators, $F(S)$ satisfy
\begin{itemize}
\item $F(\emptyset) = 0$
\item $F( \mathbb{R}) = 1$
\item If $S_1\cap S_2 =\emptyset$,  then $F( S_1 \cup S_2) = F( S_1)+F(S_2)$
\end{itemize}

If $A$ has a pure point spectrum and the support of \eqref{fdist} is smaller than the inter-eigenvalue spacing, then \eqref{Born3} reduces to \eqref{Born1} and there is no distinction between the measured and ``intrinsic" distributions of values for $A$. But, even in the case of a pure point spectrum, we might be saddled with a detector too crude to resolve some closely-spaced eigenvalues and would want to distinguish \eqref{Born1} from \eqref{Born3}.

The smearing of the intrinsic probability distribution \eqref{Born2} against the finite detector resolution  \eqref{fdist}, to yield \eqref{Born3}, has the obvious desired property that the intrinsic quantum-mechanical uncertainty in $A$ adds in quadrature with the detector resolution to give the \emph{measured} uncertainty
\begin{equation}
\begin{split}
(\Delta A)_{\text{meas}}^2 &= \int l^2 \Tr{F_l \rho}dl  - \left(\int  l \Tr{F_l \rho}dl\right)^2\\
&= \Tr{A^2\rho} -(\Tr{A\rho})^2 + \sigma_A^2\\
&= (\Delta A)_\rho^2 + \sigma_A^2
\end{split}
\end{equation}

Now we can state the generalization of \eqref{vonN1}. Let
\begin{equation}\label{Ldef}
F_l = L^\dagger_l L_l
\end{equation}
This splitting is additional data. We will take
\begin{equation}\label{Ldef2}
 L_l = \int f(l-\lambda) dP_\lambda = f(l-A)
\end{equation}
but you can redefine $L_l\to U L_l$ for any unitary $U$ without changing \eqref{Ldef}

von Neumann's formula now states that the post-measurement state is 
\begin{equation}\label{vonN2}
\rho' = \int dl\,  L_l \rho L_l^\dagger = \int dl\,  f(l-A) \rho f(l-A)^\dagger
\end{equation}
Note that

\begin{itemize}
\item As with \eqref{vonN1}, the probability distribution for $A$ is the \emph{same} in the post-measurement state, $\rho'$, as it was in the state $\rho$. However, the probability distributions for other (non-commuting) observables are affected by \eqref{vonN2}.
\item The size of the effect depends on the detector resolution \eqref{fdist}.
\item Whereas with \eqref{vonN1}, repeated measurements of $A$ do not (further) affect the state, in the case of a continuous spectrum repeated measurements of $A$, via  \eqref{vonN2}, continue to alter the post-measurement state.
\item As with \eqref{vonN1}, we can extend \eqref{vonN2} to handle questions about conditional probabilities: ``\emph{Given} that a measurement of $A$ yielded a value $a\in S\subset \mathbb{R}$, what is the probability that \dots?" To compute conditional probabilities, we use the Born rule in conjunction with the conditional density matrix,
\begin{equation}
\rho_S = \frac{\int_S dl\, f(l-A) \rho f(l-A)^\dagger}{\Tr{F(S)\rho}}
\end{equation}
\item The map from $\rho\to\rho'$ is a trace-preserving completely-positive map.
\end{itemize}

This last point is important, and bears remarking upon.

\section{Complete Positivity}\label{proveCompletePositivity}

A map $\Phi: B(\mathcal{H})\circlearrowleft$ is \emph{completely-positive} if, for every finite-dimensional $V$, 
$$\Phi\otimes \Bid_V: B(\mathcal{H}\otimes V)\circlearrowleft$$
is a positive map.

Stinespring's Theorem provides a characterization of when a positive map is completely positive.

\begin{utheorem}[Stinespring]
$\Phi: B(\mathcal{H})\circlearrowleft$ is completely positive iff there exists a Hilbert space $\mathcal{K}$, a unital $*$-homomorphism, $s:  B(\mathcal{H})\to B(\mathcal{K})$, and a bounded operator $W: \mathcal{H}\to  \mathcal{K}$ such that $\Phi$ can be written in the form
$$
 \Phi(\rho) = W^\dagger s(\rho) W
$$
\end{utheorem}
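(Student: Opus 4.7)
The plan is to prove the two directions separately. The easy direction, sufficiency, amounts to direct verification: given the decomposition $\Phi(\rho)=W^\dagger s(\rho)W$, the tensor-extended map on $B(\mathcal{H}\otimes V)$ factors as conjugation by $W\otimes\Bid_V$ composed with $s\otimes\Bid_V$. A unital $*$-homomorphism extends across the tensor product with a (finite-dimensional) matrix algebra and, being a $*$-homomorphism, preserves positivity; conjugation by a bounded operator also preserves positivity. Composing, $\Phi\otimes\Bid_V$ is positive for every finite-dimensional $V$.

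The substantive direction is necessity, which I would establish via a GNS-style dilation. Form the algebraic tensor product $\mathcal{K}_0=B(\mathcal{H})\otimes\mathcal{H}$ and equip it with the sesquilinear form
\begin{equation*}
\Bigl\langle \sum_i a_i\otimes v_i,\ \sum_j b_j\otimes w_j\Bigr\rangle := \sum_{i,j}\bigl\langle v_i,\Phi(a_i^*b_j)\,w_j\bigr\rangle_{\mathcal{H}}.
\end{equation*}
Positivity of this form on the diagonal, $\sum_{i,j}\langle v_i,\Phi(a_i^*a_j)v_j\rangle\geq 0$, is precisely complete positivity of $\Phi$ applied to the positive matrix $(a_i^*a_j)_{ij}\in M_n(B(\mathcal{H}))$. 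Quotienting by the null subspace and completing in the induced norm yields the Hilbert space $\mathcal{K}$.

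I then define $s(c)[a\otimes v]:=[ca\otimes v]$ and $W:\mathcal{H}\to\mathcal{K}$ by $Wv:=[1\otimes v]$. Straightforward algebraic identities show $s$ is a unital $*$-homomorphism (for the $*$-property, both $\langle \xi,s(c)\eta\rangle$ and $\langle s(c^*)\xi,\eta\rangle$ pair on pure tensors to $\langle v,\Phi(a^*cb)w\rangle$), and
\begin{equation*}
\langle w,W^\dagger s(\rho) W v\rangle = \bigl\langle [1\otimes w],[\rho\otimes v]\bigr\rangle = \langle w,\Phi(\rho)v\rangle,
\end{equation*}
so $W^\dagger s(\rho) W=\Phi(\rho)$.

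The main obstacle is verifying that $s(c)$ descends to a well-defined bounded operator on the quotient, with $\|s(c)\|\leq\|c\|$. The required inequality
\begin{equation*}
\sum_{i,j}\bigl\langle v_i,\Phi\bigl(a_i^*(\|c\|^2 - c^*c)\,a_j\bigr)v_j\bigr\rangle\geq 0
\end{equation*}
follows because $\|c\|^2-c^*c=d^*d$ with $d=(\|c\|^2-c^*c)^{1/2}$, so $a_i^*(\|c\|^2-c^*c)a_j=(da_i)^*(da_j)$ and the corresponding $n\times n$ matrix of operators is positive in $M_n(B(\mathcal{H}))$; a second application of complete positivity of $\Phi$ then supplies the bound. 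Once $s$ is shown to be well-defined and bounded, extending it by continuity to all of $\mathcal{K}$ and checking boundedness of $W$ are routine, and the remaining verifications are purely algebraic.
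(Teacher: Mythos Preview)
Your argument is the standard GNS-type dilation proof of Stinespring's theorem and is correct as written. However, the paper does not supply a proof of this statement at all: it quotes Stinespring's theorem as a known classical result and then simply exhibits, for the specific map $\rho\mapsto\int dl\,f(l-A)\,\rho\,f(l-A)^\dagger$, the data $\mathcal{K}=\mathcal{H}\otimes L^2(\mathbb{R})$, $s(\rho)=\rho\otimes\Bid$, and $W:|\psi\rangle\mapsto f(l-A)^\dagger|\psi\rangle$ that realize the dilation. So there is no proof in the paper to compare yours against; what you have written is the textbook construction (essentially Stinespring's original 1955 argument) and goes well beyond what the authors set out to do in this section.
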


In the case of \eqref{vonN2}, the conditions of Stinespring's Theorem are satisfied with

\begin{itemize}
\item $\mathcal{K}= \mathcal{H}\otimes L^2(\mathbb{R})$
\item $s(\rho)= \rho\otimes \Bid$
\item $W: |\psi\rangle\mapsto f(l-A)^\dagger|\psi\rangle$
\item $W^\dagger: |\psi(l)\rangle \mapsto \int dl f(l-A) |\psi(l)\rangle$
\end{itemize}

We can do even better. There exists a completely-positive, trace-preserving \emph{evolution} from $\rho$ to $\rho'$. Let $A(\epsilon)$ be any 1-parameter family of self-adjoint operators which interpolate between $A(0) = \overline{\lambda} \Bid$ (where $\overline{\lambda}$ is a constant) and $A(1) = A$. We could, for instance, choose
$$
A(\epsilon) = (1-\epsilon) \overline{\lambda} \Bid + \epsilon A
$$
but any interpolation will do. Then
\begin{equation}\label{evolr}
\rho(\epsilon) = \integral dl f(l-A(\epsilon)) \rho f(l-A(\epsilon))^\dagger
\end{equation}
is a 1-parameter family of density matrices which interpolates between $\rho(0)=\rho$ and $\rho(1)=\rho'$. Moreover, by Stinespring's Theorem, the map $\Phi(\epsilon): \rho\mapsto\rho(\epsilon)$ is completely-positive. 

We can use this to describe how the density matrix evolves \emph{during} the measurement. Let us use our freedom to redefine $L_l\to U L_l$ to write
\begin{equation}\label{measA}
\rho(t) = \int e^{-i Ht} f(l-\widetilde{A}(t)) \rho(0) f(l-\widetilde{A}(t))^\dagger e^{i Ht}\, dl
\end{equation}
where $\widetilde{A}(t)$  is a time-dependent self-adjoint operator which interpolates between $\widetilde{A}(0)= \overline{\lambda} \Bid$ and $\widetilde{A}(t_f)=A$:
\begin{equation}\label{tildeAdef}
\begin{split}
   \widetilde{A}(t) &= \begin{cases}\overline{\lambda} \Bid &t<0\\A&t>t_f\end{cases}\\
  \end{split}
\end{equation}
For $t<0$ and $t>t_f$, $\rho(t)$ evolves unitarily with Hamiltonian $H$. During the time-interval $t\in[0,t_f]$, it evolves from the pre- to the post-measurement state as dictated by (our generalization of) von Neumann's formula. Moreover, the map $\Phi(t,0): \rho(0)\to\rho(t)$ is completely-positive for all $t$.

In \cite{Weinberg:2016axv}, Weinberg demanded a \emph{stronger} condition, namely that the map $\Phi(t+\delta t,t): \rho(t)\to\rho(t+\delta t)$ be completely-positive for all $t$. Equivalently, he demanded that $\rho(t)$ satisfy a Lindblad equation. As we shall see in the next section, this imposes constraints on the form of the interpolation $\widetilde{A}(t)$. While we don't have a general characterization of the interpolations which satisfy a Lindblad equation, we will find a broad class which \emph{do}.

\section{The Lindblad equation}\label{showLindblad}

For our detector acceptance function we will take a Gaussian,
\begin{equation}\label{Egaussian}
f(l) = \frac{1}{(2\pi\sigma_A^2)^{\mathrlap{1/4}}}\quad e^{-l^2/4\sigma_A^2}
\end{equation}
The interpolation $\widetilde{A}(t)$ can be fairly arbitrary, but we will require that $[\widetilde{A}(t), A]=0$, which is tantamount to demanding that the measuring apparatus not disturb the probability distribution for $A$ (the distribution might still evolve because $A$ may not commute with $H$, but that's the only source of its evolution). For generic $A$, this implies $[\widetilde{A}(t),\dot{\tilde{A}}(t)]=0$.

It is not hard to show (see Appendix \ref{proveLindblad}) that under these conditions \eqref{measA} is the solution to the  equation

\begin{equation}\label{generaleq}
\dot{\rho} = -i[H,\rho]+\frac{1}{4 \sigma_A^2} \left(X \rho Y + Y \rho X- \{XY,\rho\} \right)
\end{equation}
where
\begin{equation}\label{XYdef}
\begin{split}
  X &= e^{-i Ht} \widetilde{A} e^{i H t}\\
  Y &= e^{-i Ht} \dot{\tilde{A} }e^{i H t}\\ 
\end{split}
\end{equation}
For a generic interpolation, $\widetilde{A}(t)$, this is \emph{not} of Lindblad form. We can rewrite it as
\begin{equation}\label{rewrite}
\dot{\rho} = -i[H,\rho]+D[\tfrac{1}{4\sigma_A}(X+Y)]\rho - D[\tfrac{1}{4\sigma_A}(X-Y)]\rho
\end{equation}
where $D[L]:\rho \to L\rho L^\dagger - \tfrac{1}{2}\{L^\dagger L,\rho\}$ is the Lindblad superoperator. Lindblad's equation requires a sum over $D[L_i]\rho$ with \emph{positive} coefficients (which, by rescaling the $L_i$, we can take to be 1).

However, if we choose a linear interpolation,
\begin{equation}\label{linAt}
\widetilde{A}(t) = (1-\epsilon(t))\overline{\lambda} + \epsilon(t) A,
\end{equation}
where $\epsilon(t)$ interpolates between $0$ and $1$ as 
\begin{equation}\label{epsilondef}
\begin{split}
   \epsilon(t) &= \begin{cases}0&t<0\\1&t>t_f\end{cases}\\
   \dot{\epsilon}(t) &\geq 0,\quad \forall t
\end{split}
\end{equation}
the equation \eqref{generaleq} reduces to a time-dependent Lindblad equation,
\begin{equation}
\dot{\rho} = -i[H,\rho]+ D[L(t)]\rho
\end{equation}
with
\begin{equation}
L(t) = \tfrac{(2\epsilon\dot{\epsilon})^{\mathrlap{1/2}}}{2\sigma_A}\quad e^{-i Ht} A e^{i H t}
\end{equation}

Thus we see that any linear interpolation \eqref{linAt} will satisfy a Lindblad equation for a monotonic, but otherwise \emph{arbitrary} function $\epsilon(t)$. If we relax the monotonicity assumption in \eqref{epsilondef}, this will fail to be true, even though $\rho(0)\to\rho(t)$ is still a completely-positive map.

An example might be helpful to illustrate the distinction. Let $\mathcal{H}=L^2(\mathbb{R})$ and let us measure the observable $x$. For present purposes, let us neglect the Hamiltonian, $H$, in writing the evolution of $\rho(t)$ during the measurement. We can represent $\rho(t)$ as an integral kernel,
$$
\rho(t): g(x)\mapsto \int \rho(x,y;t) g(y) dy
$$
In \cite{Distler:2012eh},  we showed that
$$
\rho(x,y;t) = e^{-\epsilon(t)^2 (x-y)^2/8\sigma_x^2}  \rho(x,y;0)
$$
The map $\rho(0)\to\rho(t)$ is completely positive. But now, it is easy to write the map for arbitrary initial and final times,
$$
\rho(x,y;t_2) = e^{-\left(\epsilon(t_2)^2-\epsilon(t_1)^2\right) (x-y)^2/8\sigma_x^2}  \rho(x,y;t_1)
$$
The map $\rho(t_1)\to\rho(t_2)$ is only positive (let alone completely-positive) if
$$
\epsilon(t_2)^2 \geq \epsilon(t_1)^2
$$
Since $\epsilon(0)=0$, this is trivially-satisfied if we restrict ourselves to imposing this condition only for $t_1=0$. Demanding complete positivity of $\rho(t)\to\rho(t+\delta t)$ at every instant in time (or, equivalently, for all $t_2>t_1$) enforces the stronger condition, namely that $\epsilon(t)$ be monotonic in time.

We will leave to future work the question of what conditions need to be imposed for a detector acceptance function, $f(l)$, which is not Gaussian, or an interpolation $\widetilde{A}(t)$ which is not linear.

\section{Application: Heisenberg's Uncertainty Relation and its Generalizations}\label{genHeisenberg}

In Heisenberg's original paper \cite{Heisenberg}, he derived the relation
$$
   \Delta x \Delta p \geq 1
$$
from a gedanken experiment where one measures the position and momentum of an electron. Localizing the electron, very well, requires a high-energy probe, which imparts a large kick -- and hence a large uncertainty in the momentum -- to the electron. We will see that his relation\footnote{Note that this is completely distinct from the Robertson uncertainty relation \cite{Robertson},
$$
(\Delta A)_\rho (\Delta B)_\rho \geq \tfrac{1}{2}\left\vert \Tr{-i[A,B]\rho}\right\vert
$$
which pertains to the intrinsic quantum mechanical uncertainties of a pair of observables in the \emph{same} quantum state, $\rho$.} is a special case of the general story of successive measurement of non-commuting observables.

Consider measuring an observable $A$, with detector resolution $\sigma_A$, followed by a measurement of $B$, with detector resolution $\sigma_B$. We have
\[
\begin{split}
(\Delta A)^2_{\text{meas}} &= (\Delta A)^2_\rho + \sigma_A^2\\
(\Delta B)^2_{\text{meas}} &= (\Delta B)^2_{\rho'} + \sigma_B^2
\end{split}
\]
where $\rho'$ is the post-measurement state \eqref{vonN2} produced by measuring $A$.

\begin{equation}\label{hard}
\begin{split}
(\Delta B)^2_{\rho'}  &= \Tr{B^2\rho'} - (\Tr{B \rho'})^2\\
&=\int dl\, \Tr{f(l-A)^\dagger B^2 f(l-A) \rho} - \left(\int dl\, \Tr{f(l-A)^\dagger B f(l-A) \rho}\right)^2
\end{split}
\end{equation}
In general, it's difficult to make any headway in simplifying \eqref{hard}. If $[A,[A,B]]=0$, then we can simplify things greatly:
\begin{equation}\label{better}
\begin{split}
(\Delta B)^2_{\rho'} = (\Delta B)^2_{\rho} &+ s\bigl(\Tr{\{-i[A,B],B\}\rho} -2 \Tr{B\rho}\Tr{-i[A,B]\rho}\bigr)\\
&- s^2 \bigl(\Tr{-i[A,B]\rho}\bigr)^2 + \int |f'|^2\, dl\, \Tr{(-i[A,B])^2\rho}
\end{split}
\end{equation}
where
\begin{equation}
s = i\int \overline{f}f'\,dl
\end{equation}
Using Cauchy-Schwarz, we can show
\begin{equation}\label{Cauchy}
\int |f'|^2\,dl -\left\vert\int\overline{f}f'\,dl\right\vert^2 \geq \frac{1}{4\sigma_A^2}
\end{equation}
and hence
\begin{equation}\label{stillComplicated}
\begin{split}
(\Delta B)^2_{\rho'} \geq (\Delta B)^2_{\rho} &+ s\left(\Tr{\{-i[A,B],B\}\rho}-2 \Tr{B\rho}\Tr{-i[A,B]\rho}\right)\\
&+ s^2 \left(\Tr{(-i[A,B])^2\rho}-\left(\Tr{-i[A,B]\rho}\right)^2 \right)\\
&+\frac{1}{4\sigma_A^2}\Tr{(-i[A,B])^2\rho}
\end{split}
\end{equation}
This is still fairly formidable, except in two special cases
\begin{itemize}
\item The commutator is a constant, $-i[A,B]=c$.
\item The function,  $f$, is real and hence $s=0$.
\end{itemize} 

\noindent
In both cases, the dependence of \eqref{stillComplicated} on $s$ drops out, either because $s=0$ or because the coefficient of $s^k$ vanishes.

In the first case, \eqref{better} reduces to
\begin{equation}
(\Delta B)^2_{\rho'} = (\Delta B)^2_{\rho} + c^2 \left(\int |f'|^2\,dl - s^2\right)
\end{equation}

\begin{equation}
(\Delta B)^2_{\rho'} \geq (\Delta B)^2_{\rho} +  \frac{c^2}{4\sigma_A^2}
\end{equation}
From this, the product of the measured uncertainties,
\begin{equation}\label{useRobertson}
\begin{split}
(\Delta B)^2_{\text{meas}}(\Delta A)^2_{\text{meas}}\geq& \left((\Delta B)^2_{\rho} + \sigma_B^2 +\frac{c^2}{4\sigma_A^2}\right)\left((\Delta A)^2_{\rho}+\sigma_A^2\right)\\
\geq & \frac{c^2}{2} + \sigma_A^2 \sigma_B^2  + \frac{c^2 \sigma_A^2}{4(\Delta A)_\rho^2} +\left(\frac{c^2}{4\sigma_A^2} +\sigma_B^2\right)(\Delta A)_\rho^2
\end{split}
\end{equation}
where we used the Robertson uncertainty relation, $(\Delta B)^2_{\rho}\geq \frac{c^2}{4(\Delta A)^2_{\rho}}$. The RHS is minimized for $(\Delta A)^2_{\rho} = \sqrt{\frac{c^2\sigma_A^2}{4} \left(\frac{c^2}{4\sigma_A^2}+\sigma_B^2\right)}$, so we finally obtain
\begin{equation}\label{withRobertson}
(\Delta B)^2_{\text{meas}}(\Delta A)^2_{\text{meas}}\geq \frac{1}{4} \left(|c|+\sqrt{c^2+4\sigma_A^2\sigma_B^2}\right)^2
\end{equation}
With $c=1$, this is our more-careful derivation \cite{Distler:2012eh} of Heisenberg's relation
\begin{equation}
(\Delta x)_{\text{meas}}(\Delta p)_{\text{meas}}\geq \frac{1}{2}\left(1+\sqrt{1+4\sigma_x^2\sigma_p^2}\right)
\end{equation}

More generally, instead of using the Robertson uncertainty relation in \eqref{useRobertson}, we could use the Schr\"odinger uncertainty relation \cite{Schroedinger},
\[
(\Delta A)^2_{\rho}(\Delta B)^2_{\rho}\geq \tfrac{1}{4} X
\]
where
\[
\begin{split}
X&= \left\vert \Tr{\{A,B\}\rho}-\Tr{A\rho}\Tr{B\rho}\right\vert^2 + \left\vert\Tr{-i[A,B]\rho}\right\vert^2\\
&= \left\vert \Tr{\{A,B\}\rho}-\Tr{A\rho}\Tr{B\rho}\right\vert^2 + c^2
\end{split}
\]
Instead of \eqref{withRobertson}, we obtain
\begin{equation}
(\Delta B)^2_{\text{meas}}(\Delta A)^2_{\text{meas}}\geq \frac{1}{4} \left(\sqrt{X}+\sqrt{c^2+4\sigma_A^2\sigma_B^2}\right)^2
\end{equation}

Returning to the case where $[[A,B],A]=0$ ( but $[A,B]$ not necessarily constant) and $f$ is real, we see that the same steps yield the inequality
\begin{equation}
(\Delta B)^2_{\text{meas}}(\Delta A)^2_{\text{meas}}\geq \frac{1}{4} \left(\sqrt{X}+\sqrt{c^2+4\sigma_A^2\sigma_B^2}\right)^2
\end{equation}
where here
\[
\begin{split}
X&= \left\vert \Tr{\{A,B\}\rho}-\Tr{A\rho}\Tr{B\rho}\right\vert^2 + \left\vert\Tr{-i[A,B]\rho}\right\vert^2\\
c^2&\equiv \Tr{(-i[A,B])^2\rho}\quad.
\end{split}
\]

\omit{

\section{Application: The Energy-Time Uncertainty Relation}

In its original version, the Energy-Time Uncertainty Relation puts a lower-bound on the time required to measure the energy of the system to a given precision
\begin{equation}\label{EtUR}
   \Delta E \Delta t \geq 1
\end{equation}
In our notation, $\Delta E$ is what we called $\sigma_E$ and $\Delta t = t_f$ in \eqref{measH}-\eqref{epsilondef}.

Clearly, without imposing further conditions, \eqref{EtUR} cannot hold in general\footnote{Note that this is quite distinct from the Mandelstam-Tamm inequality \cite{MandelstamTamm},
$$
\Delta H \Delta A \geq \frac{1}{2} \left\vert\frac{d}{dt}\Tr{A\rho} \right\vert
$$
which follows from the Robertson uncertainty relation for the pair of observables $H,A$.}. The reason is clear: $\sigma_E$ is the width of the (Gaussian) detector acceptance function, \eqref{Egaussian}; $t_f$ is determined by the interpolation \eqref{epsilondef} and these are \emph{a priori} independent.

This is well-known. Consider a model where the measuring apparatus + the system, together, form a closed quantum system, with Hilbert space $\mathcal{H}_{\text{app}}\otimes\mathcal{H}$. The evolution of the combined system is unitary, governed by the total Hamiltonian
$$
  H_{\text{tot}}= H_{app} + H_{\text{int}}(t) + H
$$
where, for simplicity, we have assumed that Hamiltonia governing the apparatus and system are time-independent (at least on the time-scale of the measurement) and that the coupling between them, $H_{\text{int}}(t)$ is nonzero only in the time interval $t\in [0,t_f]$.  We also take $[H_{\text{int}}(t), H]=0$ (which is to say that the measurement process does not disturb the observable being measured).

Clearly \cite{AharonovBohmEnergyTime}, we can trade a shorter time-inverval, $t_f$ by boosting the strength of $H_{\text{int}}$.  So \eqref{EtUR} can be violated arbitrarily strongly. One way to avoid this conclusion, and thereby salvage some version of \eqref{EtUR}, is to put a bound on the norm
\begin{equation}\label{boundInt}
\Vert H_{\text{int}}(t)\Vert \lt E_0\quad.
\end{equation}
Unfortunately, even if $H$ is a bounded operator, there's no good reason to assume that $H_{\text{int}}(t)$ is (and we \emph{certainly} don't expect $H_{apparatus}$ to be bounded). At best, we might hope that the commutator, $[H_{app},H_{\text{int}}(t)]$, is bounded
\begin{equation}\label{delta}
\bigl\Vert [H_{app},H_{\text{int}}(t)]\bigr\Vert \lt \delta
\end{equation}
Then the evolution of the density matrix
\begin{equation}\label{evol}
\begin{gathered}
\rho(t) = e^{-iHt}\Tr[\mathcal{H}_{\text{app}}]{U(t)\rho_{\text{tot}}(0)U(t)^{-1}}e^{iHt}\\
U(t) = e^{-i\int_0^t \left(H_{\text{app}}+H_{\text{int}}(t')\right)dt'}\\
\rho_{\text{tot}}(0)= \rho_{\text{app}}(0)\otimes\rho(0)
\end{gathered}
\end{equation}
departs only slightly from a unitary evolution. Differentiating \eqref{evol} with respect to time, we obtain, on general grounds, an evolution equation of Lindblad form
$$
\dot{\rho} = - i[H,\rho] + \sum_i L_i \rho L_i^\dagger - \tfrac{1}{2} \{ L_i^\dagger L_i, \rho\}
$$
where the $L_i(t)$ are, in general, time-dependent. The condition $[H_{\text{int}}(t), H]=0$ translates into the statement that $[L_i(t),H]=0$.

What we would like to claim is that \eqref{delta} translates into a bound on the norm(s) of the $L_i$. But I don't really see a general statement (since, as above, $H$ is not necessarily bounded).

}

\section*{Acknowledgements}
We would like to thank D.~Freed and S.~Weinberg for discussions. This material is based upon work supported by the National Science Foundation under Grant Numbers PHY--1521186 and PHY--1620610.

\section*{Appendices}
\appendix
\addcontentsline{toc}{section}{Appendices}

\section{Proof of Equation \eqref{Cauchy}}

\begin{ulemma}
For any $f(l)$ as in \eqref{fdist}, we have
$$
\int |f'|^2\,dl -\left\vert\int \overline{f}f'\, dl\right\vert^2\geq \frac{1}{4\sigma_A^2}
$$
\end{ulemma}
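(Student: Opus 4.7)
The inequality is a position-momentum uncertainty principle in disguise, adapted to accommodate the phase freedom encoded in $s=i\int\overline{f}f'\,dl$. My approach mirrors the standard ``completing the square'' proof of Heisenberg's inequality, but with a shift chosen to absorb exactly the $|\int\overline{f}f'\,dl|^2$ term.

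First I would establish two preliminaries. (a) Integrating $(|f|^2)'$ against $1$ and using decay of $f$ at infinity gives $\int\overline{f}'f\,dl+\int\overline{f}f'\,dl=0$, so $\int\overline{f}f'\,dl$ is purely imaginary; hence $s$ is real and $\bigl|\int\overline{f}f'\,dl\bigr|^2=s^2$. (b) Integrating $l(|f|^2)'$ by parts against $1$ and using $\int|f|^2\,dl=1$ gives
\begin{equation*}
2\,\mathrm{Re}\!\left(\int l\,\overline{f}f'\,dl\right)=-1,
\end{equation*}
so in particular $\bigl|\int l\,\overline{f}f'\,dl\bigr|^2\geq\tfrac14$.

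Next I would define the shifted function $g(l):=f'(l)-is\,f(l)$ and expand
\begin{equation*}
\int|g|^2\,dl=\int|f'|^2\,dl+s^2\int|f|^2\,dl-is\!\int\overline{f}f'\,dl+is\!\int f\overline{f'}\,dl=\int|f'|^2\,dl-s^2,
\end{equation*}
using part (a) to evaluate the cross terms. This reduces the desired inequality to showing $\int|g|^2\,dl\geq 1/(4\sigma_A^2)$.

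The final step is a Cauchy–Schwarz application:
\begin{equation*}
\left|\int l\,\overline{f}\,g\,dl\right|^2\leq\left(\int l^2|f|^2\,dl\right)\left(\int|g|^2\,dl\right)=\sigma_A^2\int|g|^2\,dl.
\end{equation*}
Because $\int l|f|^2\,dl=0$, the $is f$ piece in $g$ drops out and $\int l\,\overline{f}\,g\,dl=\int l\,\overline{f}f'\,dl$, whose modulus squared is at least $1/4$ by (b). Combining yields $\sigma_A^2\bigl(\int|f'|^2\,dl-s^2\bigr)\geq 1/4$, which rearranges to the claim. The only subtle step is guessing the correct shift $-isf$; once that is motivated by the requirement that $\int l\overline{f}g\,dl$ coincide with the quantity whose real part we already control, everything is routine.
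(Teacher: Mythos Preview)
Your argument is correct and is essentially the paper's own proof, just organized a bit more efficiently: both are Cauchy--Schwarz/positivity arguments applied to linear combinations of $lf$, $if$, and $f'$. The paper parametrizes the combination as $\alpha lf + i\beta\sigma_A f + \gamma\sigma_A^2 f'$, imposes non-negativity of its norm, and optimizes over the parameters at the end; you instead pick the optimal phase shift up front and then apply Cauchy--Schwarz once. One slip to fix: with $g = f' - isf$ as you defined it, the cross terms in $\int|g|^2$ actually come out as $+is\int\overline{f}f' - is\int f\overline{f'}$, which yields $\int|f'|^2 + 3s^2$, not $\int|f'|^2 - s^2$. The expansion you displayed (and hence the rest of the proof) is the correct one for $g = f' + isf$; change the sign in the definition of $g$ and everything goes through unchanged, since the $\pm isf$ piece drops out of $\int l\,\overline{f}g$ either way.
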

\begin{proof}
Consider, for real $\alpha,\beta,\gamma$ (and $\alpha^2+\beta^2>0$),
\[
\begin{split}
0&\leq \frac{1}{4\sigma_A^2} \left\Vert\alpha l f + i\beta \sigma_A f +\gamma\sigma_A^2 f'\right\Vert^2\\
&= \alpha^2+\beta^2 +\gamma^2\sigma_A^2\int |f'|^2\, dl - \alpha\gamma - 2 \beta\gamma\sigma_A s
\end{split}
\]
where we denote $\Vert g\Vert^2 = \int |g|^2\, dl$ and $s=i \int\overline{f}f'\, dl$.
Viewing the RHS as a quadratic in $\gamma$, its discriminant must be negative
$$
0\geq (\alpha+2\beta\sigma_A s)^2 -4(\alpha^2+\beta^2)\sigma_A^2 \int |f'|^2\,dl
$$
or
$$
\int |f'|^2\,dl \geq \frac{(\alpha+2\beta\sigma_A s)^2}{4(\alpha^2+\beta^2)\sigma_A^2}
$$
Subtracting $s^2$ from both sides,
$$
\int |f'|^2\,dl -s^2\geq \frac{\alpha^2(1-4\sigma_A^2s^2)+4\alpha\beta\sigma_A s}{4(\alpha^2+\beta^2)\sigma_A^2}
$$
The RHS is maximized, as a function of $\beta$, for $\beta=2\alpha\sigma_A s$. At that point, it is equal to $\frac{1}{4\sigma_A^2}$.
\end{proof}

\section{Proof of Equation \eqref{generaleq}}\label{proveLindblad}

We start with
\[
\rho(t) = \int e^{-i Ht} f(l-\widetilde{A}(t)) \rho(0) f(l-\widetilde{A}(t))^\dagger e^{i Ht}\, dl
\]
where
\[
f(l) = \frac{1}{(2\pi\sigma_A^2)^{\mathrlap{1/4}}}\quad e^{-l^2/4\sigma_A^2}
\]
Differentiating and assuming $[ \widetilde{A}(t),\dot{\tilde{A}} (t)]=0$, we have
\[
\begin{split}
\dot{\rho}&=-i[H,\rho]+\frac{1/2\sigma_A^2}{(2\pi\sigma_A^2)^{\mathrlap{1/2}}}\quad\int e^{-iHt}e^{-(l-\widetilde{A})^2/4\sigma_A^2}\left\{
(l-\widetilde{A})\dot{\tilde{A}} ,\rho(0) 
\right\}e^{-(l-\widetilde{A})^2/4\sigma_A^2}e^{iHt}dl\\
&=-i[H,\rho]+\frac{1/2\sigma_A^2}{(2\pi\sigma_A^2)^{\mathrlap{1/2}}}\quad
\begin{array}[t]{l}
\displaystyle{\int} e^{-iHt}\left\{
(l'-\widetilde{A})(\dot{\tilde{A}}),\right.\\
\left. e^{-(l-\widetilde{A})^2/4\sigma_A^2}e^{is(l-l')/2}
\rho(0)
e^{-(l-\widetilde{A})^2/4\sigma_A^2}e^{is(l-l')/2}
\right\}e^{iHt}dl dl' \frac{ds}{2\pi}
\end{array}
\end{split}
\]
Completing the square
$$
-(l-\widetilde{A})^2/4\sigma_A^2 +i s(l-l')/2 = -\frac{1}{4\sigma_A^2}(l-is\sigma_A^2-\widetilde{A})^2 - \tfrac{1}{4}s^2\sigma_A^2+ \tfrac{1}{2}i s (\widetilde{A}-l')
$$
and letting $l''=l-is\sigma_A^2$, the $l'$ integral can be done
\[
\begin{split}
\dot{\rho}=-i[H,\rho]+&\frac{1/2\sigma_A^2}{(2\pi\sigma_A^2)^{\mathrlap{1/2}}}\quad\int e^{-s^2\sigma_A^2/2}e^{-iHt} \left\{(i\delta'(s)-\delta(s)\widetilde{A})\dot{\tilde{A}},\right.\\
&\left.e^{is\widetilde{A}/2}e^{-(l''-\widetilde{A})^2/4\sigma_A^2} \rho(0)
e^{-(l''-\widetilde{A})^2/4\sigma_A^2} e^{is\widetilde{A}/2} 
\right\} e^{iHt} dl'' \frac{ds}{2\pi}
\end{split}
\]
Commuting the $e^{\pm iHt}$ through,
\[
\begin{split}
\dot{\rho}=-i[H,\rho]+&\frac{1/2\sigma_A^2}{(2\pi\sigma_A^2)^{\mathrlap{1/2}}}\quad\int e^{-s^2\sigma_A^2/2}
 \left\{(i\delta'(s)-\delta(s)X)Y,\right.\\
&\left.e^{isX/2}
e^{-iHt}e^{-(l''-\widetilde{A})^2/4\sigma_A^2} \rho(0)
e^{-(l''-\widetilde{A})^2/4\sigma_A^2}e^{iHt}e^{isX/2}\right\} 
\, dl'' \frac{ds}{2\pi}
\end{split}
\]
where 
\[
\begin{split}
  X &= e^{-i Ht} \widetilde{A} e^{i H t}\\
  Y &= e^{-i Ht} \dot{\tilde{A} }e^{i H t}\\ 
\end{split}
\]
Now we can do the Gaussian integral over $l''$ to turn $\rho(0)$ into $\rho(t)$
\[
\begin{split}
\dot{\rho}&=-i[H,\rho]+\frac{1}{2\sigma_A^2}\int e^{-s^2\sigma_A^2/2} \left\{(i\delta'(s)-\delta(s)X)Y, e^{isX/2}\rho(t)e^{isX/2}\right\}ds\\
&=-i[H,\rho]+ \frac{1}{2\sigma_A^2}\left(\left\{Y,\tfrac{1}{2}(X\rho+\rho X)\right\}-\left\{XY,\rho\right\}\right)\\
&=-i[H,\rho]+ \frac{1}{4\sigma_A^2}\left(X\rho Y + Y \rho X-\{XY,\rho\}\right)\qquad\qed
\end{split}
\]
\vfill\eject
\addcontentsline{toc}{section}{References}
\bibliographystyle{utphys}
\bibliography{measurement}

\end{document}